\newtheorem{theorem}{Theorem}
\newtheorem{lemma}[theorem]{Lemma}
\newtheorem{corollary}[theorem]{Corollary}
\title{A Note on Colourings of Connected $2$-edge Coloured Cubic Graphs}
\begin{document}

\maketitle
\author{Christopher Duffy}\\
\address{Department of Mathematics and Statistics, University of Saskatchewan, CANADA}
\begin{abstract}
In this short note we show that every connected $2$-edge coloured cubic graph admits an $10$-colouring. This lowers the best known upper bound for the chromatic number of connected $2$-edge coloured cubic graphs.
\end{abstract}

A \emph{$2$-edge coloured graph} is a pair $(G,s)$ where $G$ is a simple graph and $s:E(G) \to \{+,-\}$.
When there is no chance for confusion we refer to the $2$-edge coloured graph  $(G,s)$ as $G$.
Alternatively, in various portions of the literature (\cite{NRS15}, \cite{S14} for example), $2$-edge coloured graphs are called \emph{signified graphs}.
We call an edge $e\in E(G)$ \emph{positive} when $s(e) = +$.
Otherwise we say $e$ is \emph{negative}.
We call a vertex $v \in (G,s)$  \emph{positive} when $s(uv) = +$ for all $u \in N(v)$.
We call a vertex $y \in (G,s)$ \emph{negative} when $s(xy) = -$ for all $x \in N(v)$.

Let $(G,s)$ and $(H,t)$ be $2$-edge-coloured graphs.
There is a \emph{homomorphism of $(G,s)$ to $(H,t)$}, denoted $(G,s) \to (H,t)$  when there exists a homomorphism $\phi: G \to H$ so that for all $uv \in E(G)$ we have $s(uv) = t(\phi(u)\phi(v))$. 
When $\phi$ is such a homomorphism we write $\phi: (G,s) \to (H,t)$.
Informally, a homomorphism of $(G,s)$ to $(H,t)$ is a homomorphism of $G$ to $H$ that preserves edge colours.
Following the conventions from study of graph colouring and homomorphism, when $\phi$ is a homomorphism of $(G,s)$ to $(H,d)$ we say that $\phi$ is a $|H|$-colouring of $(G,s)$.
We let $\chi\left((G,s)\right)$ denote the least integer $k$ such that $(G,s)$ admits a $k$-colouring. 
We call $\chi((G,s))$ the \emph{chromatic number} of $(G,s)$.
If $\mathcal{F}$ is a family of $2$-edge coloured graphs, $\chi(\mathcal{F})$ is defined to be the least integer $k$ such that $\chi(F) \leq k$ for all $F \in \mathcal{F}$.

Equivalently one may define a \emph{$k$-colouring} of $(G,s)$  as a labelling $c:V(G) \to \{0,1,\dots, k-1\}$ such that:
\begin{enumerate}
	\item $c(u) \neq c(v)$ for all $uv \in E(G)$;
	\item if $uv, xy \in E(G)$ with $c(u) = c(x)$ and $c(v) = c(y)$ , then $s(uv) = s(xy)$.
\end{enumerate}

Such a labeling implicitly defines a homomorphism to a $2$-edge coloured graph $(H,t)$ with vertex set $\{0,1,\dots, k-1\}$ and $t(ij) = \star$ ($\star \in \{+,-\}$) when there is an edge $uv \in E(G)$ such that $c(u)=i$ and $c(v) = j$ and $s(uv) = \star$.

We refer the reader to \cite{moprs08} for a comprehensive survey on homomorphisms and colourings of $2$-edge coloured graphs and their relation to \emph{signed graphs}.
We call a graph \emph{properly subcubic} when it has maximum degree at most three and contains a vertex of degree at most two.
For graph theoretic notation and terminology not defined here, we refer the reader to \cite{bondy2008graph}.

Let $SP_9$ be the $2$-edge coloured graph $(K_9,s)$ given in Figure \ref{fig:SP9}.
This $2$-edge coloured graph appears in previous studies of $2$-edge coloured graphs (for example \cite{D19Grids,moprs08,O14}) due to the following convenient adjacency properties.
Observe that $SP_9$ is isomorphic to the $2$-edge coloured graph formed from $SP_9$ by changing the colour of every edge as $K_3\square K_3$ is isomorphic to its own complement.
\begin{figure}
	\begin{center}
		\includegraphics[scale=.5]{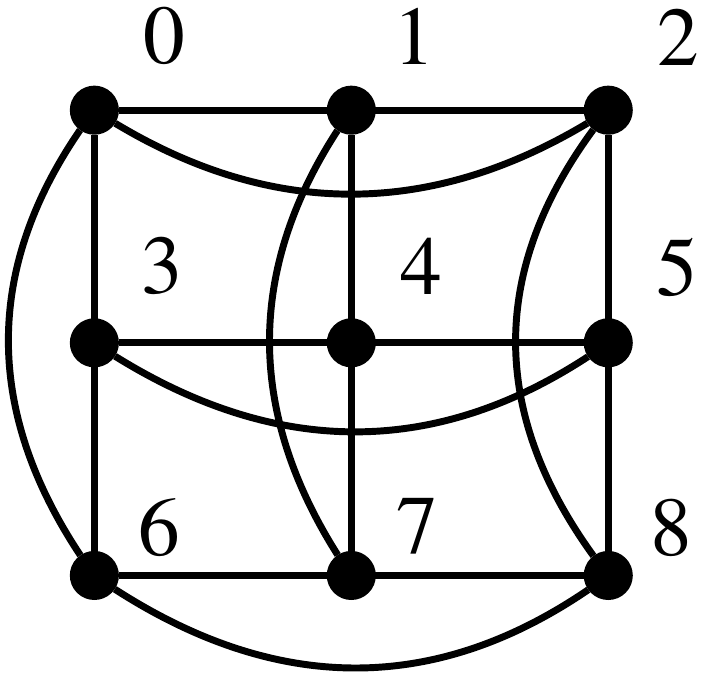}
	\end{center}
	\caption{$SP_9$ -- negative edges are not shown}
	\label{fig:SP9}
\end{figure}

\begin{lemma}\label{lem:SP9adj}
	For every positive edge $uv$ of $SP_9$ there exists
	\begin{itemize}
		\item a unique vertex $z$ such that $uz$ and $vz$ are positive;
		\item exactly two vertices, $z_1,z_2$ such that $uz_1$ and $uz_2$ are positive and $vz_1$ and $vz_2$ are negative;
		\item exactly two vertices, $z_1,z_2$ such that $uz_1$ and $uz_2$ are negative and $vz_1$ and $vz_2$ are positive;
		\item exactly two vertices  $z_1,z_2$ such that $uz_1$, $uz_2$, $vz_1$ and $vz_2$ are negative.
	\end{itemize}
	For every negative edge $xy$ of $SP_9$ there exists
\begin{itemize}
	\item  a unique vertex $z$ such that $uz$ and $vz$ are negative;
	\item exactly two vertices, $z_1,z_2$ such that $uz_1$ and $uz_2$ are negative and $vz_1$ and $vz_2$ are positive;
	\item exactly two vertices, $z_1,z_2$ such that $uz_1$ and $uz_2$ are positive and $vz_1$ and $vz_2$ are negative;
	\item exactly two vertices  $z_1,z_2$ such that $uz_1$, $uz_2$, $vz_1$ and $vz_2$ are positive.
\end{itemize}
\end{lemma}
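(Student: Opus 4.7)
The plan is to exploit the explicit algebraic description of $SP_9$ as the $2$-edge coloured graph on vertex set $\{0,1,2\}^2$ in which two distinct vertices $(i,j)$ and $(i',j')$ are joined by a positive edge precisely when they agree in exactly one coordinate (so the positive-edge subgraph is $K_3 \square K_3$), and by a negative edge otherwise. Two observations will do most of the work. First, the automorphism group of $K_3 \square K_3$ acts transitively on its edges: any positive edge can be mapped to any other by permuting the two coordinate factors and independently permuting the symbols $\{0,1,2\}$ within each factor. Second, as noted in the excerpt, $SP_9$ is isomorphic to the graph obtained by swapping edge colours, so any count established for positive edges transfers directly to the corresponding count for negative edges.

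First I would fix, without loss of generality, the positive edge $uv$ where $u = (0,0)$ and $v = (0,1)$, and classify the remaining seven vertices $z = (a,b)$ according to how they share coordinates with $u$ and with $v$. The vertex $z = (0,2)$ shares the first coordinate with both $u$ and $v$, giving the unique $z$ with $uz$ and $vz$ both positive. The vertices $z = (1,0)$ and $z = (2,0)$ each share the second coordinate with $u$ but no coordinate with $v$, giving the two vertices with $uz$ positive and $vz$ negative. Symmetrically, $z = (1,1)$ and $z = (2,1)$ give the two vertices with $uz$ negative and $vz$ positive. The remaining vertices $z = (1,2)$ and $z = (2,2)$ share no coordinate with either endpoint, giving the two vertices with both $uz$ and $vz$ negative. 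Combined with edge-transitivity on positive edges, this proves the four bullet points for an arbitrary positive edge.

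For the second half of the statement concerning a negative edge $xy$, I would invoke the fact that the colour-swap of $SP_9$ is isomorphic to $SP_9$. Under such an isomorphism a negative edge of $SP_9$ is mapped to a positive edge of $SP_9$, and the four classes of common-neighbour types for positive edges are permuted to the four classes listed in the negative-edge bullets (with positive and negative interchanged throughout). Hence the counts for the negative case follow immediately from the positive case. The only mild obstacle is keeping the case analysis organised; since the counting is purely combinatorial on seven candidate vertices and the symmetry argument handles the negative edges without further work, I do not anticipate any real difficulty.
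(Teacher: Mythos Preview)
Your argument is correct. The paper does not actually give a proof of this lemma beyond the single sentence ``Lemma~\ref{lem:SP9adj} is easily confirmed by inspection,'' so there is nothing substantive to compare against. Your approach is a clean way of organising that inspection: rather than checking all edges of $SP_9$ directly, you use the explicit model on $\{0,1,2\}^2$ with positive-edge subgraph $K_3\,\square\,K_3$, invoke edge-transitivity of $K_3\,\square\,K_3$ (any automorphism of $K_3\,\square\,K_3$ automatically preserves its complement, hence is an automorphism of $SP_9$) to reduce to a single positive edge, and then invoke the colour-swap self-isomorphism already noted in the paper to obtain the negative-edge statement for free. This is a strictly more informative write-up than what the paper provides, and the symmetry reductions are both valid.
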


Lemma \ref{lem:SP9adj} is easily confirmed by inspection. Lemma \ref{lem:SP9adj} is closely related to the notion of property $P_{i,j}$ for oriented graphs (see \cite{SV96}).  

Recall the following results for the colourings and homomorphisms of $2$-edge coloured cubic graphs.
Let $K_4^{s+}$ and $K_4^{s-}$ be the $2$-edge coloured cubic graphs given in Figure \ref{fig:k4s} (dashed edges indicate negative edges).
The $2$-edge coloured graph $SP_9^\star$ is constructed from $SP_9$ by adding two additional vertices $0^\prime$ and $1^\prime$ so that
$s(0^\prime0) = s(0^\prime1^\prime) = - $,  $s(1^\prime1)=+$ and $s(0^\prime k) = s(0 k)$, $s(1^\prime k) = s(1 k)$ for all $k \in \{2,3,4,5,6,7\}$.

\begin{figure}
	\begin{center}
	\includegraphics[scale=.5]{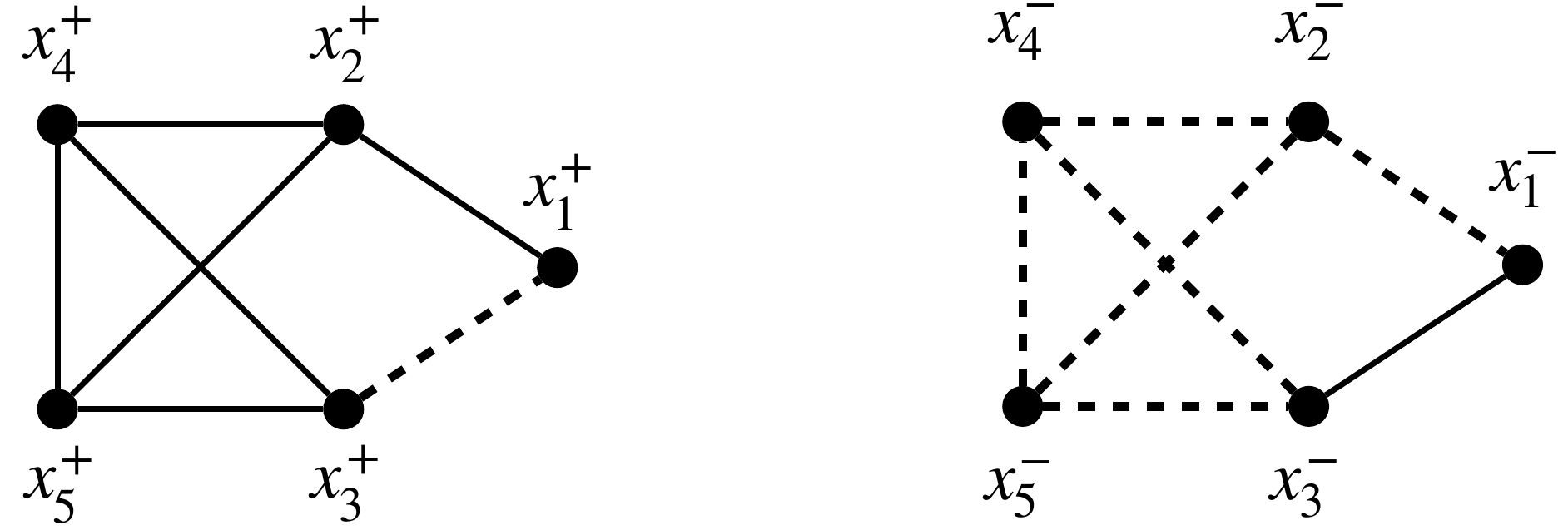}
	\end{center}
	\caption{$K_4^s+$ and $K_4^s-$}
	\label{fig:k4s}
\end{figure}

\begin{lemma}\label{lem:subCubic}\cite{D15Thesis,J19}
If $G$ is a properly subcubic $2$-edge coloured graph that contains no induced copy of $K_4^{s+}$ or $K_4^{s-}$, then $G \to SP_9$.
\end{lemma}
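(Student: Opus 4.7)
I proceed by induction on $|V(G)|$, with trivial base case. For the inductive step, since $G$ is properly subcubic, pick a vertex $v$ with $\deg(v)\leq 2$, preferably one with at least one neighbour so that $G-v$ remains properly subcubic. The plan is to delete $v$, apply the inductive hypothesis to $G-v$, and extend the resulting homomorphism $\phi\colon G-v\to SP_9$ to $v$. Vertex deletion preserves the exclusion of induced $K_4^{s+}$ and $K_4^{s-}$, and whenever $v$ has a neighbour, that neighbour has degree at most $2$ in $G-v$, so $G-v$ is properly subcubic and the inductive hypothesis applies; isolated vertices are handled trivially.

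When $\deg(v)\leq 1$, extension is immediate: a direct count from Lemma~\ref{lem:SP9adj} shows that each vertex of $SP_9$ has exactly four positive and four negative neighbours, so the single sign constraint from $v$'s unique neighbour can always be satisfied. When $\deg(v)=2$ with neighbours $u_1,u_2$ and $\phi(u_1)\neq\phi(u_2)$, Lemma~\ref{lem:SP9adj} directly supplies a vertex of $SP_9$ that is $s(vu_1)$-adjacent to $\phi(u_1)$ and $s(vu_2)$-adjacent to $\phi(u_2)$, completing the extension.

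The main obstacle is the remaining subcase: $\deg(v)=2$, $u_1u_2\notin E(G)$, $\phi(u_1)=\phi(u_2)$, and $s(vu_1)\neq s(vu_2)$. Here I would instead apply the inductive hypothesis to the auxiliary graph $G'=(G-v)+u_1u_2$, assigning the new edge a sign of my choice. Since each $u_i$ had degree at most $2$ in $G-v$, the graph $G'$ has maximum degree at most $3$ and remains properly subcubic, and any homomorphism $G'\to SP_9$ automatically sends $u_1,u_2$ to distinct vertices, reducing us to the previous case. The delicate step, which I expect to be the hardest part of the argument, is verifying that $G'$ has no induced $K_4^{s+}$ or $K_4^{s-}$: any such subgraph must use the new edge $u_1u_2$, so has the form $\{u_1,u_2,x,y\}$ with $x,y$ common $G$-neighbours of $u_1,u_2$ and $xy\in E(G)$; the degree constraints then force $N_G(u_i)=\{v,x,y\}$ for $i=1,2$, yielding at most one candidate clique, with five of its six edges already signed by $G$. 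Since $K_4^{s+}$ and $K_4^{s-}$ are two specific signings of $K_4$, one of the two possible choices for the sign of $u_1u_2$ should avoid producing either forbidden subgraph, and carrying this out cleanly in every configuration is the crux of the argument.
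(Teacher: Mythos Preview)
First, note that the paper does not actually prove this lemma: it is quoted from \cite{D15Thesis,J19} without argument, so there is no in-paper proof to compare against. I therefore evaluate your sketch on its own.

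There is a genuine gap in the key subcase. When you pass to $G'=(G-v)+u_1u_2$ you assert that $G'$ ``remains properly subcubic,'' but this need not hold. Suppose $v$ is the \emph{only} vertex of $G$ of degree at most two and $\deg_G(u_1)=\deg_G(u_2)=3$. Then in $G'$ the degrees of $u_1$ and $u_2$ first drop by one (deleting $v$) and then rise by one (adding $u_1u_2$), while every other degree is unchanged; hence $G'$ is $3$-regular and your inductive hypothesis does not apply. For a concrete family, start from any cubic $2$-edge coloured graph $H$ containing an edge $u_1u_2$, delete that edge, and attach a fresh vertex $v$ to both endpoints with opposite signs. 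The resulting $G$ is properly subcubic with $v$ its unique low-degree vertex, $u_1u_2\notin E(G)$, and your construction recovers exactly $G'=H$. Since the whole content of the lemma is that the single low-degree vertex is what makes the homomorphism to $SP_9$ possible, an auxiliary graph that discards this vertex without producing a new one cannot be fed back into the same induction. Your treatment of the $K_4^{s\pm}$-exclusion for $G'$ is also only promised rather than carried out, but that issue is secondary to the structural gap above.
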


\begin{theorem}\label{thm:PinlouEtal.}\cite{J19}
	If $G$ is a $2$-edge coloured cubic graph, then $G \to SP_9^\star$.
\end{theorem}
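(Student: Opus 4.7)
The plan is to reduce the problem to Lemma~\ref{lem:subCubic} by deleting a single edge of $G$, then extend the resulting homomorphism to $SP_9^\star$ using the auxiliary vertices $0'$ and $1'$.

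First I would dispose of the case in which some connected component of $G$ is a $2$-edge coloured $K_4$. Since $G$ is cubic, any induced $K_4$ must constitute an entire component. For such components a direct verification suffices: the vertices $\{0, 0', 1, 1'\}$ of $SP_9^\star$ induce a $K_4$ with signs fixed by the construction ($s(0'0)=-$, $s(1'1)=+$, $s(0'1')=-$, and $s(01)$ inherited from $SP_9$), and by Lemma~\ref{lem:SP9adj} $SP_9$ contains enough $K_4$ substructures to accommodate the remaining sign patterns, in particular those of $K_4^{s+}$ and $K_4^{s-}$.

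Assume henceforth that $G$ has no induced $K_4$. Pick any edge $e = xy$ of $G$ and set $G' := G - e$. Then $G'$ is properly subcubic, since $x$ and $y$ drop to degree $2$, and it still has no induced $K_4^s$. By Lemma~\ref{lem:subCubic} there is a homomorphism $\phi : G' \to SP_9$.

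It remains to extend $\phi$ to a homomorphism $G \to SP_9^\star$. If $\phi(x) \ne \phi(y)$ and the $SP_9$-edge $\phi(x)\phi(y)$ has sign $s(xy)$, then $\phi$ is already a homomorphism $G \to SP_9 \subseteq SP_9^\star$. Otherwise, using the automorphism group of $SP_9$ I would compose $\phi$ with a suitable automorphism to normalise the pair $(\phi(x), \phi(y))$ into one of finitely many canonical configurations involving the vertices $0$ and $1$. In each such configuration, the defining relations $s(0'0) = -$, $s(1'1) = +$, $s(0'1') = -$ together with $s(0'k) = s(0k)$ and $s(1'k) = s(1k)$ for $k \in \{2,\ldots,7\}$ allow a reassignment of $x$ or $y$ to $0'$ or $1'$ that repairs the sign of the edge $xy$.

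The main obstacle is showing that such a reassignment is compatible with the remaining colour constraints. Concretely, reassigning $\phi(x) = 0$ to $0'$ (say) is only valid when the other two neighbours of $x$ in $G'$ are mapped by $\phi$ into $\{2,3,4,5,6,7\}$, where $0$ and $0'$ have identical adjacencies. The proof therefore reduces to a case analysis, leveraging Lemma~\ref{lem:SP9adj} and the transitivity properties of $SP_9$, showing that either such a configuration can always be arranged via an automorphism of $SP_9$ (and possibly a different initial choice of the edge $e$), or else the original $\phi$ already extends without recolouring.
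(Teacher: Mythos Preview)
The paper does not contain a proof of this theorem: it is quoted verbatim from \cite{J19} and used as background, so there is no ``paper's own proof'' to compare your attempt against.

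On the substance of your sketch, two concrete issues would need to be addressed before this could stand as an independent proof. First, your handling of the exceptional components rests on the assertion that $\{0,0',1,1'\}$ induces a $K_4$ in $SP_9^\star$; it does not. By the stated construction, $0'$ is adjacent only to $0$, to $1'$, and to $k\in\{2,\dots,7\}$, and likewise $1'$ is adjacent only to $1$, to $0'$, and to $\{2,\dots,7\}$. Hence $0'1$ and $1'0$ are non-edges, and the induced subgraph on $\{0,0',1,1'\}$ is a $4$-cycle, not a clique. More fundamentally, you treat $K_4^{s+}$ and $K_4^{s-}$ as if they were $2$-edge coloured copies of $K_4$; in this paper they are defined only as ``$2$-edge coloured cubic graphs'' via Figure~\ref{fig:k4s}, and the surrounding lemmas (e.g.\ the five vertices $x_1^+,\dots,x_5^+$ appearing in Lemma~\ref{lem:extendPlus}) indicate they are not $K_4$'s. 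Consequently ``$G$ has no induced $K_4$'' is not the hypothesis you need in order to invoke Lemma~\ref{lem:subCubic} on $G-e$.

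Second, even granting the reduction, the extension step is only outlined. Reassigning $\phi(x)=0$ to $0'$ requires the other two neighbours of $x$ to land in $\{2,\dots,7\}$ (not merely outside $\{0,1\}$, since $0'$ is also non-adjacent to $8$), and you have not shown that an automorphism of $SP_9$ together with a judicious choice of $e$ can always arrange this simultaneously with the analogous constraint at $y$. That is the heart of the argument, and it is left as an unperformed case analysis.
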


\begin{theorem} \label{thm:familyBound}
	For $\mathcal{G}_3$, the family of $2$-edge coloured cubic graphs,  $8 \leq \chi(\mathcal{G}_3) \leq 11$.
\end{theorem}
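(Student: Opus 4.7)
The proof naturally splits into two independent parts, one for each bound.

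For the upper bound $\chi(\mathcal{G}_3) \le 11$, my plan is to appeal directly to Theorem \ref{thm:PinlouEtal.}. That result guarantees that every $2$-edge coloured cubic graph $G$ satisfies $G \to SP_9^\star$. From the construction of $SP_9^\star$ out of $SP_9$ by adjoining the two new vertices $0'$ and $1'$, we have $|V(SP_9^\star)| = |V(SP_9)| + 2 = 11$, so every such $G$ admits an $11$-colouring and hence $\chi(\mathcal{G}_3) \le 11$. No further argument is required here.

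For the lower bound $\chi(\mathcal{G}_3) \ge 8$, the task is to exhibit a $2$-edge coloured cubic graph $H$ that admits no homomorphism into any $2$-edge coloured graph on at most $7$ vertices. My plan is to build $H$ by combining several copies of the rigid blocks $K_4^{s+}$ and $K_4^{s-}$ from Figure \ref{fig:k4s}: each of these has chromatic number exactly $4$, so a single copy already pins down four mutually distinct colour classes with prescribed inter-class edge signs. I would then glue copies together in a cubic and connected way---for example by subdividing an edge of each block and identifying the resulting degree-two portions, or by replacing a vertex of one copy with an edge linked to another copy---so that the colourings of the various blocks are forced to agree globally. The goal is to stack enough local constraints that no $7$-vertex target can simultaneously accommodate them all, using an adjacency-counting argument in the spirit of Lemma \ref{lem:SP9adj}.

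The main obstacle is the verification step: one must rule out \emph{every} possible $2$-edge coloured target on at most $7$ vertices. In practice this is handled either through a careful hand argument exploiting the symmetries and rigidity of the $K_4^s$-blocks, or via a short exhaustive computer search over the finite collection of $7$-vertex $2$-edge coloured graphs. Since Theorem \ref{thm:familyBound} is phrased as a summary of currently known bounds rather than as a new contribution of this note, I expect the concrete lower-bound example to be drawn from the constructions already documented in \cite{D15Thesis} or \cite{J19}.
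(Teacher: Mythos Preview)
Your proposal matches the paper's treatment almost exactly. The paper does not give a self-contained proof of Theorem~\ref{thm:familyBound}; it simply records that the upper bound follows from Theorem~\ref{thm:PinlouEtal.} (since $|V(SP_9^\star)|=11$), and that the lower bound is witnessed by an explicit $2$-edge coloured cubic graph of chromatic number~$8$ constructed in \cite{D15Thesis} and \cite{J19}. Your handling of the upper bound is identical to this.

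The only divergence is in the lower bound: the paper stops at the citation, whereas you sketch a speculative construction based on gluing copies of $K_4^{s+}$ and $K_4^{s-}$. That sketch is not needed here and is also not quite a proof as written---the gluing operations you describe (subdividing edges, identifying degree-two portions) would have to be specified carefully to keep the result cubic, and the claim that no $7$-vertex target suffices is left to an unperformed case analysis or computer search. Since you yourself conclude by deferring to the example in \cite{D15Thesis,J19}, the cleanest fix is simply to drop the speculative middle portion and cite the reference directly, which is precisely what the paper does.
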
	

The lower bound in Theorem \ref{thm:familyBound} comes by way exhibition of a $2$-edge coloured graph with chromatic number $8$ (see \cite{D15Thesis} or \cite{J19}).
The upper bound first appeared in \cite{D15Thesis}, but restricted to connected $2$-edge coloured cubic graphs,
This upper bound was later generalized for all $2$-edge coloured cubic graphs in \cite{J19}.
This upper bound is not known to be tight.
In this note we improve this upper bound for the class of connected $2$-edge coloured cubic graphs.

We begin with two technical lemmas. Let $SP_9^\dagger$ be the $2$-edge coloured graph formed from $SP_9$ by adding a new vertex $z$ so that there is a positive edge $zu$ for all $u \in \{0,1,2\}$ and a negative edge $zv$ for all $v \in \{6,7,8\}$.

\begin{lemma}\label{lem:extendPlus}
	Let $G$ be a $2$-edge coloured graph formed by appending a vertex $v$ and an edge $vx_1^+$ to a copy of $K_4^{s+}$.
	For every $i \in V(SP_9)$, there exists a homomorphism $\phi: G \to SP_9^\dagger$ so that $\phi(v) = i$ and $\phi(x_5^+) = z$.
\end{lemma}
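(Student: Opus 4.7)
The plan is to proceed by case analysis on the image $i \in V(SP_9)$, exploiting the rich automorphism group of $SP_9$ to cut down the number of essentially different cases. The graph $G$ has only a bounded number of vertices, so once the two boundary conditions $\phi(v) = i$ and $\phi(x_5^+) = z$ are fixed, it remains only to select images for the internal vertices of the copy of $K_4^{s+}$.

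The key structural observation is that in $SP_9^\dagger$ the vertex $z$ sees $\{0,1,2\}$ only by positive edges and $\{6,7,8\}$ only by negative edges. Hence fixing $\phi(x_5^+) = z$ restricts the images of the neighbours of $x_5^+$ in $K_4^{s+}$ to the prescribed triples in $V(SP_9)$, according to the sign of the incident edge. Independently, the positive edge $vx_1^+$ forces $\phi(x_1^+)$ to lie in the triple of positive neighbours of $i$ in $SP_9$, which is nonempty by Lemma \ref{lem:SP9adj}.

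From here I would build $\phi$ greedily along the edges of $K_4^{s+}$: Lemma \ref{lem:SP9adj} supplies exact counts of common neighbours for any signed edge of $SP_9$, so each time we extend $\phi$ to a new vertex of $K_4^{s+}$, it produces a nonempty set of candidate images that respect the sign of every edge already coloured. The symmetries of $SP_9$, namely vertex transitivity, the row- and column-permutations coming from the $K_3 \square K_3$ structure, and the sign-swap automorphism, reduce the nine choices of $i$ to a small number of representatives. Crucially, the sign-swap automorphism of $SP_9$ induces a corresponding symmetry of $SP_9^\dagger$ that exchanges the two triples at $z$, which halves the remaining casework.

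The main obstacle, I expect, will be verifying global consistency: confirming that for every $i$ the local choices can simultaneously satisfy both boundary conditions at $v$ and at $x_5^+$. In particular when $i \in \{0,1,2\} \cup \{6,7,8\}$, the vertex $i$ is already adjacent to $z$ in $SP_9^\dagger$ with a fixed sign, so the extensions at $v$ and at $x_5^+$ compete for colour resources inside the same triple, and a naive greedy choice may block a later step. Handling these cases will require a direct check for each representative $i$, most economically presented as a short table of explicit images for $x_2^+, x_3^+, x_4^+$ (and $x_1^+$) chosen with Lemma \ref{lem:SP9adj} in mind.
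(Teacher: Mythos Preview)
Your plan is essentially the paper's approach: fix $\phi(v)=i$, use symmetry of $SP_9$ to cut down the choices of $i$, and finish by explicit assignment. Two small misreadings to correct: the appended edge $vx_1^+$ need not be positive (the superscript $+$ names the vertex as belonging to $K_4^{s+}$, not the sign of the edge), so both signs must be treated; and each vertex of $SP_9$ has four positive neighbours, not a ``triple''.

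More substantively, the paper sidesteps the ``competition'' you anticipate. Rather than working inward from both boundary conditions, it argues entirely inside $SP_9$ that, whatever $i$ and whatever the sign of $vx_1^+$, one may choose $\phi(x_1^+)$ so that $\{\phi(x_2^+),\phi(x_3^+)\}=\{0,1\}$; then $\phi(x_4^+)=2$ and $\phi(x_5^+)=z$ complete the homomorphism uniformly. Because this reduction takes place in $SP_9$ alone, ordinary vertex transitivity of $SP_9$ (not any putative symmetry of $SP_9^\dagger$) already handles every $i$, so no table and no separate treatment of $i\in\{0,1,2\}\cup\{6,7,8\}$ is required.
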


\begin{proof}
	We extend a partial homomorphism $\phi: G[\{v\}] \to SP_9^\dagger$ where $\phi(v) = i$.
	
	Consider the case $i = 0$. 
	If $vx_1^+$ is positive, then we can extend $\phi$ to include $x_1$ so that $\phi(x_1) \in \{1,2, 3, 6\}$.
	Notice that each vertex of $SP_9$ is adjacent to some vertex of $\{1,2, 3, 6\}$ by a positive edge.
	Therefore when $vx_1^+$ is positive we may extend $\phi$ so that $\phi(x_2) \in \{0,1,2,3,4,5,6,7,8\}$.
	Similarly note that every vertex of $SP_9$ other than $0$ is adjacent to some vertex of $\{1,2, 3, 6\}$ by a negative edge.
	Therefore when $vx_1^+$ is positive we may extend $\phi$ so that $\phi(x_3) \in \{1,2,3,4,5,6,7,8\}$.
	In particular we can extend $\phi$ so that $\phi(x_2) \cup \phi(x_3) = \{0,1\}$.
	We now extend to the remaining  vertices by letting $\phi(x_4^+) = 2$ and $\phi(x_5^+) = z$.
	
	If $vx_1^+$ is negative, then we can extend $\phi$ to include $x_1$ so that $\phi(x_1) \in \{4,5, 7, 8\}$.
	Notice that each vertex of $SP_9$ is adjacent to some vertex of $\{4,5, 7, 8\}$ by a negative edge.
	Therefore when $vx_1^+$ is negative we may extend $\phi$ so that $\phi(x_3) \in \{0,1,2,3,4,5,6,7,8\}$.
	Similarly note that every vertex of $SP_9$ other than $0$ is adjacent to some vertex of $\{4,5, 7, 8\}$ by a positive edge.
	Therefore when $vx_1^+$ is negative we may extend $\phi$ so that $\phi(x_2) \in \{1,2,3,4,5,6,7,8\}$.
	In particular we can extend $\phi$ so that $\phi(x_2) \cup \phi(x_3) = \{0,1\}$.
	We now extend to the remaining vertices by letting $\phi(x_4^+) = 2$ and $\phi(x_5^+) = z$.
	
	Consider the case $i \neq 0$. 
	Since $SP_9$ is vertex transitive, the above argument implies that regardless of the sign of $vx_1^+$ we may extend $\phi$ so that $\phi(x_2^+) \cup \phi(x_3^+) = \{0,1\}$.
 	Thus we may extend $\phi$ so that $\phi(x_4^+) = 2$ and $\phi(x_5^+) = z$.
 	This completes the proof.
\end{proof}

\begin{lemma} \label{lem:extendNeg}
	Let $G$ be a $2$-edge coloured graph formed by appending a vertex $v$ and an edge $vx_1^-$ to a copy of $K_4^{s-}$.
	For every $i \in V(SP_9)$ there exists a homomorphism $\phi: G \to SP_9^\dagger$ so that $\phi(v) = i$ and $\phi(x_5^-) = z$.
\end{lemma}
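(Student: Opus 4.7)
The plan is to leverage the sign-complement symmetry of $SP_9$ noted just before Lemma \ref{lem:SP9adj} in order to reduce Lemma \ref{lem:extendNeg} to Lemma \ref{lem:extendPlus}. Inspection of Figure \ref{fig:k4s} confirms that $K_4^{s-}$ is the sign-complement of $K_4^{s+}$, so the graph $G$ appearing in the present statement is the sign-complement of the graph $G$ of Lemma \ref{lem:extendPlus}, with each $x_i^-$ playing the role of $x_i^+$.

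The first step is to extend the sign-swapping isomorphism of $SP_9$ to one of $SP_9^\dagger$, i.e.\ to exhibit an isomorphism $\sigma$ from $SP_9^\dagger$ to its sign-complement that fixes $z$ and exchanges the sets $\{0,1,2\}$ and $\{6,7,8\}$. Such a $\sigma$ exists provided the self-complementarity isomorphism of $K_3 \square K_3$ can be chosen to swap the positive triangle $\{0,1,2\}$ with the positive triangle $\{6,7,8\}$, for then the positive and negative neighbourhoods of $z$ are exchanged and fixing $z$ is consistent with the sign-flip. Given $\sigma$, the lemma follows quickly: for any $i \in V(SP_9)$, Lemma \ref{lem:extendPlus} applied to the sign-complement $G^+$ of $G$ with target vertex $\sigma(i)$ yields a homomorphism $\phi^+ : G^+ \to SP_9^\dagger$ with $\phi^+(v)=\sigma(i)$ and $\phi^+(x_5^+)=z$; then $\sigma^{-1} \circ \phi^+$ is a homomorphism $G \to SP_9^\dagger$ sending $v$ to $i$ and $x_5^-$ to $z$, since the sign-flip on the source side is undone by the sign-flip on the target side.

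The main obstacle is verifying that the desired $\sigma$ actually exists, which is a concrete check against the labelling of Figure \ref{fig:SP9}. Should the labelling make this awkward, the backup plan is to re-run the case analysis of Lemma \ref{lem:extendPlus} verbatim, exchanging the symbols $+$ and $-$ throughout; since Lemma \ref{lem:SP9adj} is itself symmetric in the two colours, every adjacency count used in that proof has a mirror-image counterpart supporting the same extension argument, and the target vertex $z$ of $SP_9^\dagger$ has precisely the structure required to absorb the role played by $z$ in the positive case.
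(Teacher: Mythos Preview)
Your primary route does not go through. A sign-reversing isomorphism $\sigma$ of $SP_9^\dagger$ fixing $z$ would have to send the positive $z$-neighbours $\{0,1,2\}$ onto the negative $z$-neighbours $\{6,7,8\}$. But any sign-reversing permutation of $SP_9$ must carry positive triangles to negative triangles, and under the labelling used in the paper (positive neighbours of $0$ are $\{1,2,3,6\}$, so the rows of $K_3\,\square\,K_3$ are $\{0,1,2\},\{3,4,5\},\{6,7,8\}$) both $\{0,1,2\}$ and $\{6,7,8\}$ are \emph{positive} triangles. Hence no such $\sigma$ exists, and the clean reduction to Lemma~\ref{lem:extendPlus} by global sign-swap is unavailable. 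This is exactly the obstruction you flagged as ``the main obstacle'', and it is fatal to the symmetry argument rather than merely awkward.

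Your backup plan is precisely what the paper does. The paper's proof is the one-line remark that the argument of Lemma~\ref{lem:extendPlus} goes through with the roles of the two colours exchanged, landing $\{\phi(x_2^-),\phi(x_3^-)\}=\{7,8\}$, $\phi(x_4^-)=6$, $\phi(x_5^-)=z$. Note that this is not a verbatim swap of $+$ and $-$ in the earlier proof: one must also replace the target triple $(0,1,2)$ by $(7,8,6)$ so that the final step uses the \emph{negative} $z$-edges, and the intermediate adjacency checks (the analogues of ``every vertex of $SP_9$ meets $\{1,2,3,6\}$ by a positive edge'') have to be redone for the negative neighbourhood of the chosen base vertex. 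Your description of the backup anticipates this, so once you abandon the symmetry shortcut your plan and the paper's proof coincide.
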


\begin{proof}
	The proof this result follows similarly to previous lemma by noting that a partial homomorphism $\phi: G[\{v\}  ] \to SP_9^\dagger$ with $\phi(v) = i$ can be extended so that  $\phi(x^-_2) \cup \phi(x^-_3) = \{7,8\}$, $\phi(x_4^-) = 6$ and $\phi(x_5^-) = z$. 
\end{proof}

\begin{theorem}
	If $G$ is a connected $2$-edge coloured cubic graph, then $\chi(G)\leq 10$.
\end{theorem}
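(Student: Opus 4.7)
The plan is to decompose $G$ into induced copies of $K_4^{s+}$ and $K_4^{s-}$ together with a ``$K_4^{s\pm}$-free core,'' colour the core via Lemma \ref{lem:subCubic}, and re-attach the copies using Lemmas \ref{lem:extendPlus} and \ref{lem:extendNeg}. Two structural remarks drive the argument. First, the degree sequence inside $K_4^{s+}$ read off from the Lemma \ref{lem:extendPlus} setup---a single ``boundary'' vertex $x_1^+$ of degree $2$ in $K_4^{s+}$ and four interior vertices $x_2^+,x_3^+,x_4^+,x_5^+$ of degree $3$, and symmetrically for $K_4^{s-}$---forces every induced copy of $K_4^{s\pm}$ in cubic $G$ to be attached to the rest of $G$ by a single edge at $x_1^\pm$. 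Second, if $\mathcal{H}$ is a maximal vertex-disjoint family of induced copies of $K_4^{s+}$ and $K_4^{s-}$ in $G$, then $G_0 := G - \bigcup_{H \in \mathcal{H}} V(H)$ contains no induced $K_4^{s\pm}$, else $\mathcal{H}$ could be enlarged.

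Suppose first that $\mathcal{H} \neq \emptyset$. Each $H \in \mathcal{H}$ contributes an external neighbour $v_H$ whose degree drops to $2$ upon removing $V(H)$, so $G_0$ is properly subcubic and Lemma \ref{lem:subCubic} yields $\phi_0 : G_0 \to SP_9 \subseteq SP_9^\dagger$. For each $H$ I would invoke Lemma \ref{lem:extendPlus} or \ref{lem:extendNeg} with prescribed image $i = \phi_0(v_H) \in V(SP_9)$ to extend $\phi_0$ across $H$, sending the corresponding $x_5^\pm$ to $z$. Because the members of $\mathcal{H}$ are vertex-disjoint and every interior vertex of each $H$ has all its neighbours inside the same $H$, these extensions are mutually compatible even while sharing the image $z$; gluing gives $\phi : G \to SP_9^\dagger$. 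A small subtlety is that two members of $\mathcal{H}$ may be joined in $G$ by their respective $x_1^\pm$'s, in which case one orders the extensions so that the later application sees the $\phi$-value assigned to its ``$v$'' by the earlier one.

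If $\mathcal{H} = \emptyset$ then $G$ is cubic with no induced $K_4^{s\pm}$. I would remove a vertex $w$ with neighbours $a,b,c$, apply Lemma \ref{lem:subCubic} to the properly subcubic $K_4^{s\pm}$-free graph $G - w$ to obtain $\phi_0 : G - w \to SP_9$, and then choose $\phi(w) \in V(SP_9^\dagger)$ so that for each $u \in \{a,b,c\}$ the sign of $\phi(w)\phi_0(u)$ in $SP_9^\dagger$ matches $s(wu)$. The main obstacle lies precisely here: the vertex $z$ offers positive edges only to $\{0,1,2\}$ and negative edges only to $\{6,7,8\}$, so it cannot absorb every signed triple, and the nine vertices of $SP_9$ themselves may fail to do so for a given $\phi_0$. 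The remedy I anticipate is to use the rich adjacency structure of $SP_9$ recorded in Lemma \ref{lem:SP9adj}, together with the freedom in the choice of $w$ and of the Lemma \ref{lem:subCubic} homomorphism---for instance by selecting $w$ so that its signed triple is realisable in $SP_9$ itself, or by locally re-colouring a neighbourhood of $w$ to force compatibility.
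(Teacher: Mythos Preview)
Your Case 1 (when $\mathcal{H}\neq\emptyset$) is essentially the paper's Claim 1, and the extension argument via Lemmas \ref{lem:extendPlus} and \ref{lem:extendNeg} is correct as you describe it.

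The genuine gap is in Case 2. You correctly identify that after colouring $G-w$ into $SP_9$ there is no guarantee that any vertex of $SP_9^\dagger$ realises the signed triple $(s(wa),s(wb),s(wc))$ against $(\phi_0(a),\phi_0(b),\phi_0(c))$, and your proposed remedies---``choose $w$ well'' or ``locally recolour''---are hopes rather than arguments. Lemma \ref{lem:SP9adj} only controls pairs, not triples, and the extra vertex $z$ has a very restricted neighbourhood, so neither observation closes the gap on its own. You also never use the hypothesis that $G$ is connected, which is essential: the bound is not known for disconnected cubic graphs.

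The paper's treatment of the $K_4^{s\pm}$-free case is substantially different from what you anticipate and is where most of the work lies. It proceeds by a chain of structural reductions on a putative counterexample: first ruling out vertices all of whose incident edges have the same sign (delete such a vertex, colour the rest into $SP_9$, use colour $9$ for it), then ruling out triangles (via a cut-and-recolour trick that forces specific colours on the triangle's external neighbours and then exploits the two choices in Lemma \ref{lem:SP9adj}), then partitioning $V(G)$ into vertices with two positive versus two negative incident edges and showing there can be no edge between the parts (another local surgery plus recolouring). Connectedness then yields the contradiction. None of these ideas appears in your sketch, and the single ``delete $w$ and extend'' template does not suffice.
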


\begin{proof}
	We proceed by contradicting the existence of a counter-example. Let $G$ be a $2$-edge coloured cubic graph such that $\chi(G) > 10$. 
	Certainly $G \not\to SP_9$.

	\emph{Claim 1: $G$ contains no induced copy of $K_4^{s+}$ and no induced copy of $K_4^{s-}$}.\\
	Assume otherwise. 
	Let $P_1, P_2, \dots, P_k$ be induced copies of $K_4^{s+}$ and let $N_1, N_2, \dots , N_\ell$ be induced copies of $K_4^{s-}$. 
	Let $P$ be the set of vertices contained in some copy of $K_4^{s+}$ and $N$ be the set of vertices contained in some copy of $K_4^{s-}$.
	By Lemma \ref{lem:subCubic} there is a homomorphism $\phi: G[V(G) \setminus P \cup N] \to SP_9$.
	By Lemma \ref{lem:extendPlus} $\phi$ can be extended to the vertices of $P_i$ for all $1 \leq i \leq k$.
	Similarly, by Lemma \ref{lem:extendNeg}, $\phi$ can be extended to the vertices of $N_j$ for all $1 \leq j \leq \ell$.
	Therefore $G \to SP_9^\dagger$.
	This is a contradiction as such a homomorphism yields is a $10$-colouring of $G$.

	\emph{Claim 2: No vertex of $G$ is positive or negative.}\\
	Assume otherwise. 
	Without loss of generality let $v$ be a vertex of $G$ with neighbours $u_1,u_2,u_3$ so that each of $vu_1, vu_2, vu_3$ is positive.
	By Claim 1, $G$ contains no copy of $K_4^{s+}$ or $K_4^{s-}$.
	Therefore by Lemma \ref{lem:subCubic},  there is a homomorphism $\phi: G-v \to SP_9$.
	We extend $\phi$ to be a $10$-colouring of $G$ by letting $\phi(v)=9$.

	\emph{Claim 3: $G$ contains no copy of $K_3$}.
	Assume otherwise. Let $u,v,w \in V(G)$ induce a copy of $K_3$ in $G$.
	Let $u^\prime\neq v,w$, $v^\prime\neq u,w$ and  $w^\prime \neq u,v$ be neighbours of $u,v$ and $w$ respectively in $G$.
	Without loss of generality, let $uv$ be negative.
	By Claim 1, $G$ contains no copy of $K_4^{s+}$ and no  copy of $K_4^{s-}$.
	Form $G^\prime$ from $G$ by removing the edge $uv$ and adding a vertex $z$, with positive edge $zu$ and negative edge $zv$.
	By Lemma \ref{lem:subCubic} there is a homomorphism $\phi: G^\prime \to SP_9$.
	Without loss of generality we may assume $\phi(u) = 0$ and $\phi(v) = 1$.
	Let $u^\prime\neq v,w$, $v^\prime\neq u,w$ and $w^\prime \neq u,v$ be neighbours of $u,v$ and $w$ respectively in $G$.
	Since $G$ does not admit a $10$-colouring it must be that $\phi(u^\prime) = 1$ and $\phi(v^\prime) = 0$.
	To see this, notice that if $\phi(u^\prime) \neq 1$, then letting $\phi(v) = 9$ yields a $10$-colouring of $G$.
	Consider now restricting $\phi$ to $G- \{u,v,w\}$.
	We can extend $\phi$ to include $u$ and $v$ so that $\phi(u) = 4$ and $\phi(v) = 6$ or $\phi(u) = 2$ and $\phi(v) = 3$.
	In particular, $\phi(u)$ and $\phi(v)$ can be chosen so that $\phi(u) \neq \phi(w^\prime)$ and $\phi(v) \neq \phi(w^\prime)$.
	Extending $\phi$ as such and letting $\phi(w) = 9$ yields a $10$-colouring of $G$, a contradiction.

	By Claim 2, we may partition the vertices of $G$ in to two sets $P$ and $N = V(G) \setminus P$ where vertices in $P$ are incident with exactly two positive edges and vertices in $N$ are incident with exactly two negative edges.
	
	\emph{Claim 4: There is no edge between a vertex of $P$ and a vertex of $N$}.\\
	Assume otherwise.
	Consider $u \in P$ and $v \in N$.
	Without loss of generality let $uv$ be a negative edge.
	Let $u_1\neq v $ and $u_2 \neq v$ be distinct  neighbours of $u$.
	Since $u \in P$ the edges $uu_1$ and $uu_2$ are both positive.
	Note that by Claim 4, $u_1$ and $u_2$ are not adjacent.
	Let $x \neq u$ be a neighbour of $v$ so that $vx$ is negative.
	Let $w$ be a neighbour of $v$ so that $vw$ is positive.
	Form $G^\star$ from $G$ by removing $v$ and adding a negative edge between $u_1$ and $u_2$.
	By Claim 3, we note that $G^\star$ contains no copy of $K_4^{s+}$ or $K_4^{s-}$; such $2$-edge-coloured graphs have at least three vertices incident with only positive or negative edges. By Claim 2, $G$ contains no such vertex and so $G^\star$ contains at most two.
	By Lemma \ref{lem:subCubic}, there is a homomorphism $\phi: G^\star \to SP_9$.
	By Lemma 4.5 we can extend $\phi$ to include $u$ so that $\phi(u) \neq \phi(w)$.
	Recolouring $v$ so that $\phi(v) = 9$ yields a $10$-colouring of $G$, a contradiction.
	
	\emph{Claim 5: $G$ does not exist.} As $G$ is connected and $\{N,P\}$ is a partition of $V(G)$, there is an edge between a vertex in $N$ and a vertex in $P$. This contradicts Claim 4.
\end{proof}

\begin{corollary}
	For $\mathcal{G}_3^c$, the family of connected $2$-edge coloured graphs, we have $8 \leq \mathcal{G}_3^c \leq 10$.
\end{corollary}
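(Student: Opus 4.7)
The plan is to argue by contradiction, assuming $G$ is a connected $2$-edge coloured cubic graph with $\chi(G)>10$; in particular $G\not\to SP_9$ and $G\not\to SP_9^\dagger$, since $|V(SP_9^\dagger)|=10$. The strategy is to establish a sequence of structural claims that progressively restrict $G$ until connectivity becomes inconsistent with the remaining possibilities. In each step, the ``tenth colour'' beyond the nine vertices of $SP_9$ is supplied either by the extra vertex $z$ of $SP_9^\dagger$ or by an explicitly named colour $9$ in an ad hoc partial colouring.

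The first reduction is to forbid induced $K_4^{s+}$ and $K_4^{s-}$ subgraphs in $G$. I would let $P,N$ denote the sets of vertices lying in such copies; the graph $G - (P\cup N)$ is properly subcubic and still contains no forbidden induced subgraph, so Lemma \ref{lem:subCubic} yields a homomorphism into $SP_9$, which Lemmas \ref{lem:extendPlus} and \ref{lem:extendNeg} extend across each copy with the attaching vertex $x_5^\pm$ mapped to $z$; the resulting homomorphism into $SP_9^\dagger$ contradicts $\chi(G)>10$. Next, I would rule out monochromatic-neighbourhood vertices: deleting such a $v$ leaves a properly subcubic graph to which Lemma \ref{lem:subCubic} applies, after which $v$ receives colour $9$. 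Then I would eliminate triangles: given a triangle $uvw$, I would delete an edge of the triangle and replace it by a short path through a new degree-two vertex carrying oppositely signed edges, producing a triangle- and $K_4^{s\pm}$-free properly subcubic graph. Lemma \ref{lem:subCubic} then gives an $SP_9$-colouring, and Lemma \ref{lem:SP9adj} provides enough flexibility among the images of $\{u,v\}$ to avoid the colour of $w$'s third neighbour, so that $w$ can be assigned colour $9$.

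With these structural claims in place, the ``no monochromatic vertex'' claim lets me partition $V(G)=P\sqcup N$, where vertices in $P$ are incident to two positive edges and vertices in $N$ to two negative edges. I would then show no edge joins $P$ to $N$: given such an edge $uv$ with $u\in P$ and $v\in N$, I would delete $v$ and add a compensating edge between its two remaining neighbours, which are non-adjacent by the no-triangle claim; the result is properly subcubic and still free of forbidden induced copies, so Lemma \ref{lem:subCubic} applies, and Lemma \ref{lem:SP9adj} allows a small adjustment near $u$ so that $v$ may safely receive colour $9$. Since $G$ is connected and $V(G)=P\sqcup N$, there must exist a $P$--$N$ edge, yielding the final contradiction. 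The main obstacle is the triangle-elimination step, as it is the one place where the detailed adjacency structure of $SP_9$ encoded in Lemma \ref{lem:SP9adj}---rather than the black-box use of Lemma \ref{lem:subCubic} together with the extension lemmas---is essential.
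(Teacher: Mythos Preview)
Your proposal is correct and mirrors the paper's own argument essentially step for step: the same five-claim cascade (no induced $K_4^{s\pm}$ via Lemmas~\ref{lem:subCubic}, \ref{lem:extendPlus}, \ref{lem:extendNeg}; no monochromatic vertex; no triangle via the subdivided-edge trick and the flexibility in $SP_9$; the $P\sqcup N$ partition; no $P$--$N$ edge), followed by the connectivity contradiction. The only omission is the lower bound $8\le\chi(\mathcal{G}_3^c)$, which the paper simply inherits from the connected example cited under Theorem~\ref{thm:familyBound}.
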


  \bibliographystyle{plain}
\bibliography{references}

\end{document}